\DeclarePairedDelimiterX{\infdivx}[2]{(}{)}{#1 \,\delimsize\|\, #2}
\newcommand{\eps}{\varepsilon}
\newcommand{\R}{\mathbb{R}}
\newcommand{\norm}[2]{\left\| #1 \right\|_{#2}}
\newcommand{\Ep}[1]{\text{E}\!\left[#1 \right]}
\renewcommand{\epsilon}{\varepsilon}
\newcommand{\BO}[1]{{ O}(#1)}
\newcommand{\BTO}[1]{\tilde{ O}\left(#1\right)}
\newcommand{\BT}[1]{{\Theta}(#1)}
\newcommand{\BOM}[1]{\Omega(#1)}
\renewcommand{\Pr}[1]{\textnormal{Pr}\left[#1\right]}
\newtheorem{lemma}{Lemma}
\newtheorem{theorem}{Theorem}
\newtheorem{definition}{Definition}
\date{}
\begin{document}
\title{Similarity Search with Tensor Core Units\thanks{This work was partially supported by  UniPD SID18 grant, PRIN17 20174LF3T8, MIUR  "Departments of Excellence".
}}

\author[1]{Thomas D. Ahle}
\author[2]{Francesco Silvestri}

\affil[1]{
  IT University and BARC, Copenhagen, Denmark {thdy@itu.dk}}
\affil[2]{
  University of Padova, Padova, Italy {silvestri@dei.unipd.it}
}

\maketitle

\begin{abstract}
Tensor Core Units  (TCUs) are hardware accelerators developed for  deep neural networks, which efficiently support the  multiplication of two dense $\sqrt{m}\times \sqrt{m}$ matrices, where $m$ is a given hardware parameter. 
In this paper, we show that TCUs can speed up similarity search problems as well. We propose algorithms for the Johnson-Lindenstrauss  dimensionality reduction and for similarity join that, by leveraging TCUs, achieve a $\sqrt{m}$ speedup up with respect to traditional approaches.

\end{abstract}

\section{Introduction}\label{sec:intro}
Several hardware accelerators have been  introduced to speed up deep neural network computations, such as Google's Tensor Processing Units~\cite{jouppi2017datacenter} and NVIDIA's Tensor Cores~\cite{nvidia-volta}.
The most important feature of these accelerators is a hardware circuit to efficiently compute a small and dense matrix multiplication between two $\sqrt{m}\times\sqrt{m}$ matrices, where $m$ is a given hardware parameter.
On modern chips $m$ can be larger than $256$~\cite{jouppi2017datacenter}.
Matrix multiplication is indeed one of the most frequent operations in machine learning, and specialized hardware for supporting this operation can significantly reduce running times and energy requirements~\cite{JouppiYPP18}.
We refer to these accelerators as \emph{Tensor Core Units} (TCUs).

Recently, several studies have been investigating how to use TCUs in other domains.
For instance, TCUs have been used for scanning and prefix computations \cite{Dakkak19}, linear algebra primitives like matrix multiplication and FFT \cite{CSV20,LC20}, and graph problems \cite{CSV20}.
The key designing goal when developing TCU algorithms is to decompose the problem into several  small matrix multiplications of size $\sqrt{m}\times \sqrt{m}$, which are then computed on the accelerator.
Such algorithms also imply fast external memory algorithms, though not the other way around, since the matrix multiplication chip can be seen as a restricted cache~\cite{CSV20}.

The goal of this paper is to show that TCUs can  also speed up similarity search problems.  
As case studies, we propose TCU algorithms for the Johnson-Lindenstrauss  dimensionality reduction and for similarity join.
In both cases, our results improve the performance by a factor $\sqrt{m}$ with respect to state of the art approaches without hardware accelerators.

We analyze our algorithms on the \emph{$(m,\tau)$-TCU model}, which is a computational model introduced in \cite{CSV20} and capturing the main hardware features of TCU accelerators. In the $(m,\tau)$-TCU model, it is possible to compute the matrix multiplication between two matrices of size $\sqrt{m}\times\sqrt{m}$ in time $\tau$, where $m$ and $\tau$ are given parameters. 
In a traditional machine, without accelerators, we have $\tau=\BT{m^{3/2}}$
\footnote{Fast matrix multiplication algorithms require $O(m^{\omega/2})$ time with $\omega\in[2,3]$,~\cite{DBLP:conf/coco/Alman19}, but they exhibit poor experimental performance  than  traditional $\Theta(m^{3/2})$ algorithms.
}.
In contrast, with TCUs, we have  $\tau=O(m)$ (i.e., input size complexity) or even sublinear time under some assumptions.

The \emph{Johnson-Lindenstrauss (JL)} dimensionality transform reduces the dimension of a vector $x\in\R^d$ to roughly $k=\eps^{-2}\log(1/\delta)$ while preserving its norm up to a factor $1\pm\eps$ with probability at least $1-\delta$.
It is an important primitive in many learning algorithms, since it dramatically reduces the number of trained variables, while preserving important characteristics of the feature vectors, such as their pairwise inner products.
The JL transform can be represented as a multiplication of the input vector $x\in\R^d$ by a $k\times d$ matrix.
This naively takes time $\Omega(dk)$.
In this paper we use recent breakthroughs in dimensionality reduction techniques, combined with TCU's to reduce the time to $O(dk/\sqrt{m}+d+k^2\log^3\frac dk)$.
This is significant, since TCUs typically cut a factor $\sqrt{m}$ off matrix-matrix multiplication, but here we cut $\sqrt{m}$ off \emph{matix-vector} multiplication!
When $\sqrt{m}\ge k$ our dimensionality reduction takes time linear in the input dimension.
This improves upon even the famous ``Fast Johnson Lindenstrauss'' transform~\cite{ailon2006approximate}, which takes time $\Omega(d\log d+k^{2+\gamma})$ for any $\gamma>0$~\cite{ailon2009fast}, or $\Omega(d\frac{\log d}{\log m})$ with TCU optimized FFT~\cite{CSV20}.

The \emph{Similarity Join} on two sets $P$ and $Q$ of $n$ points each in $\R^d$, asks us to find all pairs $(x,y)\in P\times Q$ whose distance is below a given threshold $r$ (i.e., all near pairs).
Similarity join occurs in numerous applications, such as web deduplication and data cleaning. 
As such applications arise in large-scale datasets, the problem of scaling up similarity join for different metric distances is getting more important and more challenging. 
Exact similarity join is not possible faster than brute force~\cite{ahle2016complexity},
but by leveraging Locality Sensitive Hashing (LSH), we will develop a TCU approximate algorithm that, under some assumptions, finds all pairs in expected time $\BO{ (\frac{n}{\sqrt{m}})^\rho(\frac{|P \bowtie_r Q|d}{\sqrt{m}}  + n)}$,  where $|P \bowtie_r Q|$ is the number of near pairs.
When $\tau=\BO{m}$, the TCU algorithm exhibits a $\sqrt{m}$ speedup with respect to traditional approaches (even those based on LSH).

\section{Preliminaries}\label{sec:prelim}
\subsection{The TCU model}
The $(m,k)$-TCU model is a RAM model with an instruction to multiply two dense matrices of size $\sqrt{m}\times \sqrt{m}$ in time $\tau$, where $m$ and $\tau$ are given parameters depending on the underline platform.\footnote{The model in \cite{CSV20} is slightly different, and we use here a simplified version for the clarity of exposition.}
It is reasonable to assume that $\tau = \BO{m}$, that is  matrix multiplication takes linear time: indeed, on TCUs, the cost of the operation is upper bounded by the time for reading/writing the $\sqrt{m}\times \sqrt{m}$ matrices, while the cost of the $m^{3/2}$ elementary products is negligible due to the high level of parallelism inside TCU accelerators (e.g., systolic array).
Moreover, under some conditions on high bandwidth connections, we might have $\tau$ to be even sublinear (e.g., $\BO{\sqrt{m}}$).
We recall a result from \cite{CSV20} that will be used later:
\begin{theorem}\label{thm:mmult}
Let $A$ and $B$ be two matrices of size $p\times r$ and $r\times q$ with $p,r,q\geq \sqrt{m}$, then there exists an algorithm for computing $A\cdot B$ on a $(m,\tau)$-TCU model in time
$
\BO{prq m^{-3/2}\tau}. 
$ 
\end{theorem}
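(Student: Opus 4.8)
The plan is to reduce the rectangular product to a collection of independent $\sqrt m\times\sqrt m$ multiplications by block decomposition, and then to invoke the TCU multiplication instruction once per block. First I would assume without loss of generality that $p$, $q$, and $r$ are all exact multiples of $\sqrt m$: since each of them is at least $\sqrt m$ by hypothesis, padding every dimension up to the next multiple of $\sqrt m$ with rows or columns of zeros at most doubles it. This changes the running time only by a constant factor and introduces no spurious nonzero entries into the product, so the padding can be discarded at the end.

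Next I would partition $A$ into a grid of $(p/\sqrt m)\times(r/\sqrt m)$ blocks $A_{ik}$ and $B$ into $(r/\sqrt m)\times(q/\sqrt m)$ blocks $B_{kj}$, each of size $\sqrt m\times\sqrt m$. The corresponding block of the product is $C_{ij}=\sum_{k=1}^{r/\sqrt m}A_{ik}B_{kj}$, so I would compute each individual product $A_{ik}B_{kj}$ with one call to the TCU matrix-multiplication instruction and accumulate it into $C_{ij}$. The number of such block products is exactly $(p/\sqrt m)(q/\sqrt m)(r/\sqrt m)=pqr\,m^{-3/2}$, and each costs $\tau$, so the multiplication work alone totals $\BO{pqr\,m^{-3/2}\tau}$, matching the claimed bound.

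The one point that needs care is the cost of the accumulation, i.e.\ of the additions $\sum_k A_{ik}B_{kj}$ that combine the partial products. Each block-addition touches $m$ entries and there are $pqr\,m^{-3/2}$ of them, for a total of $\BO{pqr\,m^{-1/2}}$ scalar additions carried out in the RAM part of the model. This is dominated by the multiplication term whenever $\tau=\BOM{m}$, which is the natural regime for the model, since merely reading the $\sqrt m\times\sqrt m$ operands of a single instruction already costs $\BOM{m}$; hence the additions do not affect the asymptotic bound. Alternatively, if the TCU supports a fused multiply-accumulate, the additions are folded into the instruction at no extra cost. I expect this accumulation bookkeeping, together with the boundary padding, to be the only real subtlety; the decomposition itself is routine block matrix multiplication.
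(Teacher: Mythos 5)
Your block-decomposition argument is correct and is essentially the standard proof of this result; the paper itself does not prove Theorem~\ref{thm:mmult} but only recalls it from \cite{CSV20}, where the same reduction to $pqr\,m^{-3/2}$ calls of the $\sqrt{m}\times\sqrt{m}$ multiplication primitive is used. You also rightly flag the one genuine subtlety --- the $\BO{pqr\,m^{-1/2}}$ accumulation cost, which is only absorbed into the stated bound because the hardware fuses multiply-accumulate (in the model of \cite{CSV20} the tensor instruction computes $AB+C$) or because $\tau=\BOM{m}$ in the regime the paper cares about.
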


\subsection{Johnson-Lindenstrauss dimensionality reduction}

We say a distribution over random matrices $M\in \R^{k\times d}$ is a $(\eps,\delta)$-Johnson-Lindenstrauss (JL) distribution, if we have
$\Pr{|\|Mx\|_2 - 1|\le\eps}\ge1-\delta$  for all unit vectors $x\in\R^d$
In this section we will note some definitions and lemmas related to building and combining random matrices in ways related to JL distributions.
The first property was introduced by Kane and Nelson~\cite{kane2014sparser}:
\begin{definition}[JL-moment property]
   We say a distribution over random matrices $M\in \R^{k\times d}$ has the $(\eps,\delta,p)$-JL-moment property, when $E[\|Mx\|_2^2]=1$ and
   $
      \left(\Ep{\left|\norm{Mx}2^2-1\right|^p}\right)^{1/p}
      \le \eps \delta^{1/p}
  $
   for all $x\in \R^d$, $\|x\|_2=1$.
\end{definition}
A distribution with the $(\eps,\delta,p)$-JL-moment property is $(\eps,\delta)$-JL because of Markov's inequality:
$
   \Pr{|\|Mx\|_2-1|>\eps}
   \le
   \Ep{\left|\norm{Mx}2^2-1\right|^p}/\eps
      \le \delta.
$

An interesting property of the JL Moment Property is related to the tensor product of matrices.
The tensor (or Kronecker) product between two matrices $A\in\R^{m\times n}$ and $B\in\R^{k\times\ell}$ is defined as below.
In particular, if we take the tensor product $I_k\otimes A$, where $I_k$ is the $k\times k$ identity matrix, we get a $km\times kn$ block matrix with $A$ on the diagonal:

   \begin{align*}
      A\otimes B = \begin{bmatrix} A_{1,1} B & \cdots & A_{1,n}B \\ \vdots & \ddots & \vdots \\ A_{m,1} B & \cdots & A_{m,n} B \end{bmatrix}
      ,
      \quad
   I_k \otimes A =
   \begin{bmatrix}
      A & 0 & \cdots & 0 \\
      0 & A & \ddots & \vdots \\
      \vdots & \ddots & \ddots & 0 \\
      0 & \cdots & 0 & A
   \end{bmatrix}.
\end{align*}

The tensor product relates to the JL-moment property by the following simple lemma from~\cite{ahle2020oblivious}:
\begin{lemma}[JL Tensor lemma]\label{lem:jl-tensor}
   For any matrix,
   $Q$, with $(\eps, \delta, p)$-JL moment property,
   $I_k \otimes Q$ has $(\eps, \delta, p)$-JL moment property.
\end{lemma}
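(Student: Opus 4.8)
The plan is to exploit the block-diagonal structure of $I_k \otimes Q$ and reduce the moment bound for the large matrix to the moment bound already guaranteed for $Q$, using Minkowski's inequality as the key tool. First I would partition the input: write any unit vector $x \in \R^{kd}$ as a concatenation of $k$ blocks $x = (x_1, \dots, x_k)$ with each $x_i \in \R^d$, and set $c_i = \norm{x_i}2$, so that $\sum_{i=1}^k c_i^2 = 1$. Since $I_k \otimes Q$ is block-diagonal with $Q$ repeated on the diagonal, it acts blockwise, $(I_k \otimes Q)x = (Q x_1, \dots, Q x_k)$, and hence $\norm{(I_k \otimes Q)x}2^2 = \sum_{i=1}^k \norm{Q x_i}2^2$. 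For each nonzero block write $x_i = c_i \hat{x}_i$ with $\hat{x}_i$ a unit vector, so that $\norm{Q x_i}2^2 = c_i^2 Z_i$ where $Z_i := \norm{Q \hat{x}_i}2^2$. This expresses $\norm{(I_k \otimes Q)x}2^2 = \sum_i c_i^2 Z_i$ as a convex combination of the per-block quantities $Z_i$.

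Next I would verify the mean and then the $p$-th moment. Because $Q$ has the JL moment property, $\Ep{Z_i} = 1$ for every block, so $\Ep{\norm{(I_k \otimes Q)x}2^2} = \sum_i c_i^2 \Ep{Z_i} = \sum_i c_i^2 = 1$, which establishes the first requirement. For the moment bound, center each term to get $\norm{(I_k \otimes Q)x}2^2 - 1 = \sum_i c_i^2 (Z_i - 1)$. Writing $\norm{W}{L^p} = \br{\Ep{|W|^p}}^{1/p}$, Minkowski's inequality yields $\norm{\sum_i c_i^2(Z_i - 1)}{L^p} \le \sum_i c_i^2 \norm{Z_i - 1}{L^p}$, and the JL moment property of $Q$ bounds each factor by $\norm{Z_i - 1}{L^p} \le \eps \delta^{1/p}$; since $\sum_i c_i^2 = 1$, the whole sum is at most $\eps \delta^{1/p}$, which is precisely the claimed bound.

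The one conceptual point to get right — and the reason the argument is so short — is that all blocks share the \emph{same} random matrix $Q$, so the $Z_i$ are highly dependent, and any approach attempting to use independence across blocks would fail. Minkowski's inequality sidesteps this entirely, as it is merely the triangle inequality for the $L^p$ norm (valid for $p \ge 1$, which holds here) and requires no independence; the weights $c_i^2$ summing to one are exactly what let the per-block bound pass through unchanged. The only remaining detail is the trivial edge case $c_i = 0$, where the block contributes nothing to either the norm or the moment and $\hat{x}_i$ may be taken to be an arbitrary unit vector.
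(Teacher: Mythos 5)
Your proof is correct. The paper does not actually prove this lemma --- it imports it directly from the cited reference~\cite{ahle2020oblivious} --- but your argument (block decomposition, writing $\norm{(I_k\otimes Q)x}2^2$ as a convex combination $\sum_i c_i^2 Z_i$, and applying the triangle inequality in $L^p$ so that the weights summing to one carry the per-block bound through) is exactly the standard proof from that source, and your remark that no independence across blocks is needed correctly identifies why the argument is so short.
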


By the simple property $A\otimes B = (I\otimes B)(A\otimes I)$ this lemma allows studying the JL properties of general tensor products, as long as we can also handle matrix products.
The following generalization of the JL Moment Property will be key to doing exactly that:

\begin{definition}[$(\eps,\delta)$-Strong JL Moment Property]\label{defn:strong-jl-moment}
   Let $\eps,\delta\in[0,1]$.
   We say a distribution over random matrices $M\in \R^{m\times d}$ has the $(\eps,\delta)$-Strong JL Moment Property, when
   $\Ep{\norm{Mx}2^2}=1$ and
   $\left(\Ep{\left|\norm{Mx}2^2-1\right|^p}\right)^{1/p}
   \le \frac{\eps}e\sqrt{\frac{p}{\log1/\delta}}$,
   for all $x\in \R^d$, $\|x\|_2=1$ and all $p$ such that $2 \le p\le \log1/\delta$.
\end{definition}

Note that the  $(\eps,\delta)$-Strong JL Moment Property implies the $(\eps,\delta,\log1/\delta)$-JL Moment Property, since then $\eps \delta^{1/p} = \eps/e$.
Similarly, having the $(\eps\sqrt{2/e},\delta,p)$-JL-moment property for all $p\in[2,\log1/\delta]$ implies the Strong JL Moment Property, since $\delta^{1/p} \le \frac1{\sqrt{2e}}\sqrt{\frac{p}{\log1/\delta}}$.

The key workhorse is the following lemma by Ahle and Knudsen~\cite{ahle2019almost}.
Note that the original lemma required the  $(\eps/(C_0\sqrt{k}), \delta)$-Strong JL Moment Property, but a quick scan of the proof shows that $(\eps/(C_0\sqrt{i}), \delta)$-Strong suffices.
\begin{lemma}[JL Product lemma]\label{lem:jl-product}
    There exists a universal constant $C_0$, such that, 
    for any constants $\eps, \delta \in [0,1]$ and positive
    integer $k \in \mathbb Z_{> 0}$. If
    $
        M^{(1)} \in \mathbb R^{d_2 \times d_1}, \ldots, 
        M^{(k)} \in \mathbb R^{d_{k+1} \times d_k}
    $
    are independent random matrices satisfying the 
    $(\eps/(C_0\sqrt{i}), \delta)$-Strong JL Moment Property,
    then the matrix $M = M^{(k)} \cdot \ldots \cdot M^{(1)}$
    has the  $(\eps, \delta)$-Strong JL Moment Property.
\end{lemma}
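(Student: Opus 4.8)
\emph{Proof plan.}
The plan is to turn the product $M=M^{(k)}\cdots M^{(1)}$ into a martingale and then aggregate the contribution of each factor with a moment inequality for martingale differences. Fix a unit vector $x$, and set $y_0=x$, $y_j=M^{(j)}y_{j-1}$, so that $y_k=Mx$; let $X_j=\norm{y_j}{2}^2$ and let $\mathcal F_j=\sigma(M^{(1)},\dots,M^{(j)})$, with $X_0=1$. Conditioning on $\mathcal F_{j-1}$ fixes $y_{j-1}$, so with the unit vector $\hat y_{j-1}=y_{j-1}/\norm{y_{j-1}}{2}$ and $R_j=\norm{M^{(j)}\hat y_{j-1}}{2}^2$ we have $X_j=X_{j-1}R_j$. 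Because $M^{(j)}$ is independent of $\mathcal F_{j-1}$ and has the Strong JL Moment Property, $\Epcond{R_j}{\mathcal F_{j-1}}=1$, and hence $\Epcond{X_j}{\mathcal F_{j-1}}=X_{j-1}$; thus $(X_j)$ is a martingale with $\Ep{X_k}=1$, which already establishes the first requirement $\Ep{\norm{Mx}{2}^2}=1$.

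First I would bound the martingale increments $Z_j=X_j-X_{j-1}=X_{j-1}(R_j-1)$. Conditioning on $\mathcal F_{j-1}$ and applying the $(\eps_j,\delta)$-Strong JL Moment Property of $M^{(j)}$, where $\eps_j=\eps/(C_0\sqrt j)$, to the unit vector $\hat y_{j-1}$ gives, for all $2\le p\le\log(1/\delta)$,
\[
  \Epcond{\abs{Z_j}^p}{\mathcal F_{j-1}}^{1/p}
  = X_{j-1}\,\Epcond{\abs{R_j-1}^p}{\mathcal F_{j-1}}^{1/p}
  \le X_{j-1}\,\frac{\eps_j}{e}\sqrt{\frac{p}{\log(1/\delta)}} ,
\]
and, taking $p=2$, the conditional variance bound $\Epcond{Z_j^2}{\mathcal F_{j-1}}\le \frac{2\eps_j^2}{e^2\log(1/\delta)}\,X_{j-1}^2$.

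The crux is to convert these per-step bounds into a bound on $\norm{X_k-1}{p}$. The essential subtlety is obtaining the correct dependence on $p$: a naive two-smoothness step, $\norm{\sum_j Z_j}{p}^2\le(p-1)\sum_j\norm{Z_j}{p}^2$, already carries a $\sqrt p$ inside each $\norm{Z_j}{p}$ and would produce a spurious factor $p$ rather than $\sqrt p$, which cannot be absorbed into a universal constant for $p$ as large as $\log(1/\delta)$. Instead I would invoke a Freedman/Burkholder-type martingale moment inequality that aggregates the conditional \emph{variances}, so that $\norm{\sum_j Z_j}{p}$ scales like $\sqrt p$ times the square root of $\sum_j\Epcond{Z_j^2}{\mathcal F_{j-1}}$ (the rare-but-large increments being handled by the $L_p$ bound above). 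Combined with the variance estimate and the bootstrap $\norm{X_{j-1}}{p}\le 1+\norm{X_{j-1}-1}{p}$, this yields $\norm{X_k-1}{p}\le \kappa\,\sqrt{p/\log(1/\delta)}\,\bigl(\sum_{j=1}^k\eps_j^2\bigr)^{1/2}$ for a universal constant $\kappa$. For the original uniform weighting $\eps_j=\eps/(C_0\sqrt k)$ one has $\sum_j\eps_j^2=\eps^2/C_0^2$; the refinement to $\eps_j=\eps/(C_0\sqrt j)$ claimed in the remark follows from the sharper per-step bookkeeping of the cited proof. Choosing $C_0$ large enough to absorb $\kappa$ and the aggregate then gives $\norm{X_k-1}{p}\le\frac{\eps}{e}\sqrt{p/\log(1/\delta)}$ for all $2\le p\le\log(1/\delta)$, i.e. the $(\eps,\delta)$-Strong JL Moment Property of $M$.

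The main obstacle is precisely this aggregation step, where two difficulties compound. The first is securing the sub-Gaussian $\sqrt p$ scaling uniformly over $2\le p\le\log(1/\delta)$, which rules out the elementary two-smoothness bound and forces a conditional-variance inequality. The second is the random prefactors $X_{j-1}$, which are only known to be close to $1$ once the product is already known to concentrate. I would break this circularity by an induction on the number of factors $k$ that simultaneously maintains control of $\norm{X_j-1}{p}$ for every $j$, so that the bootstrap $\norm{X_{j-1}}{p}\approx 1$ is justified at each step.
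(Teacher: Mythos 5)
First, be aware that the paper does not actually prove this lemma: it is imported from Ahle and Knudsen, and the only ``proof'' offered is the remark that a quick scan of the cited argument shows the $\eps/(C_0\sqrt{k})$ hypothesis can be weakened to $\eps/(C_0\sqrt{i})$. Your martingale skeleton --- $X_j=\|M^{(j)}\cdots M^{(1)}x\|_2^2$, increments $Z_j=X_{j-1}(R_j-1)$, the immediate conclusion $\Ep{\|Mx\|_2^2}=1$, and the diagnosis that plain two-smoothness $\|\sum_j Z_j\|_p^2\le(p-1)\sum_j\|Z_j\|_p^2$ produces a spurious extra $\sqrt p$ that no universal constant can absorb --- matches the architecture of the cited proof, and those parts are correct.

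The genuine gap is that you defer exactly the step that decides the lemma, and that step cannot be completed for the statement as written. Your target bound $\|X_k-1\|_p\le\kappa\sqrt{p/\log(1/\delta)}\,\bigl(\sum_{j}\eps_j^2\bigr)^{1/2}$ proves the \emph{original} $\eps_j=\eps/(C_0\sqrt k)$ version, but with $\eps_j=\eps/(C_0\sqrt j)$ one has $\sum_j\eps_j^2=\eps^2\bigl(\sum_{j=1}^k 1/j\bigr)/C_0^2=\Theta(\eps^2\log k/C_0^2)$, which no universal $C_0$ absorbs. Moreover, no ``sharper per-step bookkeeping'' can rescue this, because the $\ell_2$ aggregation is already tight at $p=2$: take $d_i=1$ and $M^{(i)}=s_i$ with $s_i^2=1\pm v_i$ equiprobably, where $v_i=\frac{\eps_i}{e}\sqrt{2/\log(1/\delta)}$. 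Each factor then satisfies the $(\eps_i,\delta)$-Strong JL Moment Property (the constraint is binding at $p=2$), yet $\Ep{(\prod_i s_i^2-1)^2}=\prod_i(1+v_i^2)-1\ge\sum_i v_i^2=\frac{2}{e^2\log(1/\delta)}\sum_i\eps_i^2$, which exceeds the required $\frac{2\eps^2}{e^2\log(1/\delta)}$ as soon as $\sum_{j=1}^k 1/j>C_0^2$. So the $\sqrt i$-weighted lemma is false for $k\gtrsim e^{C_0^2}$; the paper's unproved strengthening is itself erroneous (harmless in its application, where $k=O(\log d)$ costs only a $\sqrt{\log\log d}$ factor, but the hypothesis should use summable weights such as $\eps/(C_0\sqrt i\,\log(i+1))$, or revert to $\sqrt k$). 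You should either complete your plan for the $\sqrt k$ version --- supplying an explicit Burkholder--Rosenthal-type inequality and the bootstrap induction on $\|X_j-1\|_p$, rather than invoking them as black boxes --- or flag that the statement itself needs repair.
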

Intuitively this says that combining $k$ JL reductions, we don't get an error of $\eps k$, as we would expect from the triangle inequality, but only $\eps\sqrt{k}$, as we would expect from a random walk.

\subsection{Locality Sensitive Hashing} 
Much of recent work on similarity search and join has focused on Locality Sensitive Hashing:  at a high level,  similar points (i.e., with distance $<r$) are more likely to collide  than far points (i.e., with distance $>cr$ for a given approximation factor $c$). Formally, an LSH is an {$(r,cr,p_1,p_2)$-sensitive hashing scheme:

\begin{definition}\label{def:LSH}
Fix a distance function $D: \mathbb{U}\times \mathbb{U} \rightarrow {\bf R}$.
For positive reals $r,c,p_1,p_2$, and $p_1 > p_2, c > 1$, a family of functions $\mathcal{H}$ is \emph{$(r,cr,p_1,p_2)$-sensitive} if for uniformly chosen $h\in \mathcal{H}$ and all $x, y\in \mathbb{U}$:
\begin{itemize}
	\item If $D(x,y) \leq r$ then $\Pr{h(x)=h(y)}\geq p_1$;
	\item If $D(x,y) \geq cr$ then $\Pr{h(x)=h(y)}\leq p_2$.
\end{itemize}
We say that $\mathcal{H}$ is \textit{monotonic} if $\Pr{h(x)=h(y)}$ is a non-increasing function of the distance function $D(x,y)$.
\end{definition}
LSH schemes are characterized by the $\rho=\log_{p_2} p_1$ value, with $\rho\in[0,1]$: small values of $\rho$ denote LSHs that well separate near points from far points. Term $c$ is the approximation factor.

\section{Dimensionality Reduction}\label{sec:dr}
We will describe a construction of a matrix $M\in\R^{k\times d}$ which is $(\eps,\delta)$-JL as described in the preliminaries, and for which there is an efficient algorithm for computing the matrix vector product $Mx$ on a TCU.
We first give a general lemma describing the construction, then show how it applies to TCUs:
\begin{lemma}\label{jlconstr}
Let $T(a,b,c)$ be the time for multiplying two matrices of size  $(a\times b)$ and $(b\times c)$.
   For a  constant $C>0$ and for any $d, \eps, \delta>0$,
   there exists a matrix $M\in\R^{k \times d}$, with $k=\lceil C\eps^{-2}\log1/\delta\rceil$, such that
$
      |\|Mx\|_2 - \|x\|_2| \le \eps\|x\|_2
 $
 for any $x\in\R^{d}$
   with probability $1-\delta$ 
   (i.e., $M$ is $(\eps,\delta)$-JL).
   The multiplication $Mx$ can be computed in time
 $
      \sum_{i=1}^{\ell}
      T(ik,\, \zeta ik,\, \zeta^{\ell-i})
  $
   for any  $\zeta>1$ and $\ell$ such that $\zeta^\ell = d/k$.
\end{lemma}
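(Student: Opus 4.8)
The plan is to realize $M$ as a product $M = M^{(\ell)}\cdots M^{(1)}$ of $\ell$ independent, block-diagonal random matrices $M^{(i)} = I_{a_i}\otimes Q_i$, and to read off both the correctness and the running time from this structure. Each $Q_i$ is a small dense base matrix — for instance one with i.i.d.\ (sub)gaussian entries, which is precisely the kind of dense block a TCU multiplies efficiently — drawn from a distribution with the $(\eps/(C_0\sqrt i),\delta)$-Strong JL Moment Property, where $C_0$ is the constant of Lemma~\ref{lem:jl-product}. Since achieving error $\eps/(C_0\sqrt i)$ with failure probability $\delta$ forces the output dimension of a JL block to be $\Theta(i\eps^{-2}\log(1/\delta)) = \Theta(ik)$, the block $Q_i$ maps $\R^{\zeta i k}$ to $\R^{ik}$, shrinking each block it is applied to by the factor $\zeta$; the number $a_i$ of diagonal copies is chosen so that the dimensions telescope from $d$ down to $k$ across the $\ell$ levels.

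For correctness I would argue in three moves. First, each $Q_i$ has the required Strong JL Moment Property by a standard moment computation for the base distribution. Second, I lift this to the block-diagonal matrix $M^{(i)} = I_{a_i}\otimes Q_i$ by invoking Lemma~\ref{lem:jl-tensor} at every moment order: for each fixed $p\in[2,\log(1/\delta)]$ the Strong bound for $Q_i$ is an instance of the $(\cdot,\delta,p)$-JL Moment Property, which the JL Tensor lemma preserves under $I\otimes(\cdot)$, so applying it for all $p$ simultaneously shows $M^{(i)}$ inherits the full $(\eps/(C_0\sqrt i),\delta)$-Strong JL Moment Property. Third, the JL Product lemma (Lemma~\ref{lem:jl-product}) combines the $\ell$ levels into a single matrix $M$ with the $(\eps,\delta)$-Strong JL Moment Property; this implies the $(\eps,\delta,\log(1/\delta))$-JL Moment Property and hence, by the Markov argument recorded in the preliminaries, that $M$ is $(\eps,\delta)$-JL. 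Tracking the row counts shows $M\in\R^{k\times d}$ with $k = \lceil C\eps^{-2}\log(1/\delta)\rceil$.

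For the running time I would compute $Mx = M^{(\ell)}(\cdots(M^{(1)}x))$ one level at a time. The key is the vectorization identity $\asvec(Q_i V) = (I_{a_i}\otimes Q_i)\,\asvec(V)$: applying the block-diagonal $M^{(i)}$ to the current vector is identical to reshaping that vector into a matrix $V$ with $\zeta ik$ rows and $a_i\approx\zeta^{\ell-i}$ columns and forming the single dense product $Q_i V$ of an $(ik)\times(\zeta i k)$ matrix with a $(\zeta i k)\times a_i$ matrix. This turns a matrix-vector product into a matrix-matrix product — exactly the operation TCUs accelerate — and costs $T(ik,\zeta ik,\zeta^{\ell-i})$. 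Summing over the levels $i=1,\dots,\ell$ gives the claimed bound.

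The step I expect to be the main obstacle is the parameter bookkeeping that makes all of this simultaneously consistent: the resolution $\Theta(ik)$ that Lemma~\ref{lem:jl-product} demands of level $i$ grows with $i$, while the working dimension shrinks geometrically, so one must choose $\zeta$ and the block counts $a_i$ so that (i) every level is a legitimate block decomposition, (ii) the product genuinely telescopes to exactly $k$ output coordinates, and (iii) the resulting factor sizes collapse to $T(ik,\zeta ik,\zeta^{\ell-i})$. The delicate endgame is the last few levels, where the per-block input size $\zeta ik$ approaches the whole working dimension and the block structure degenerates into a direct dense reduction; bounding that residual cost and confirming it is dominated by the stated sum is where the care is needed.
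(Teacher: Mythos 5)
Your proposal is correct and follows essentially the same route as the paper: the same factorization $M=(I_{r_\ell}\otimes A_\ell)\cdots(I_{r_1}\otimes A_1)$ with level-$i$ blocks of output dimension $\Theta(ik)$ satisfying the $(\eps/(C_0\sqrt{i}),\delta)$-Strong JL Moment Property, correctness via Lemmas~\ref{lem:jl-tensor} and~\ref{lem:jl-product}, and the running time via reshaping the vector into an $(\zeta ik)\times\zeta^{\ell-i}$ matrix so each level is one dense product costing $T(ik,\zeta ik,\zeta^{\ell-i})$. The only (harmless) discrepancy is in the bookkeeping you flag as delicate: the paper's level-$i$ block actually has input width $\zeta k\max(1,i-1)$ rather than exactly $\zeta ik$, and the stated sum is obtained as an upper bound on the exact cost.
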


Note that, depending on the speed of the rectangular matrix multiplication, it might be beneficial to pick different values for $\zeta$.

\begin{proof}
   We define the JL transformation by the following matrix:
   \begin{align*}
      M = (I_{r_\ell} \otimes A_\ell) \cdots (I_{r_1} \otimes A_1) \in \R^{r_m k_\ell \times r_1 c_1}
      ,
   \end{align*}
   where $r_1, \dots, r_\ell$ is a sequence of positive integers, $I_r$ is the $r\times r$ identity matrix, and 
   $A_1, \dots, A_{\ell-1}$ are independent $k_i\times c_i$ matrices, where $A_i$ has the $(\eps/(C_0\sqrt{i}),\delta)$-Strong JL Moment Property (SJLMP).
   By Lemmas~\ref{lem:jl-tensor} and \ref{lem:jl-product} we get that the tail
   $(I_{r_{\ell-1}} \otimes A_{\ell-1}) \cdots (I_{r_1} \otimes A_1) \in \R^{r_m k_\ell \times r_1 c_1}$
   has the $(\eps/\sqrt{C_0}, \delta)$-SJLMP.
   We further assume $A_\ell$ has the $(\eps/(\sqrt{2C_0}), \delta)$-SJLMP.
   Again by Lemmas~\ref{lem:jl-tensor} and \ref{lem:jl-product} we get that
   $M$ has the $(\eps, \delta)$-SJLMP, and thus $M$ is a JL reduction as wanted.

   Next we prove the running time of the matrix-vector multiplication.
   The key is to note that
   $I\otimes A$ is the ``block identity matrix'' with A copied along the diagonal.
   The following figure should give some some intuition:
   \begin{align*}
      (I_{r_i}\otimes A_i)x
      =
      \substack{r_i\\\text{blocks}}\left\{
      \begin{bmatrix}
         \substack{k_i}\big\{\smash[b]{{\underbrace{ A_i }_{c_i}}}&&\\
         &A_i&\\
         &&A_i
      \end{bmatrix}
      \right.
      x
      \simeq
   A_i \left.\begin{bmatrix}x_1 & \dots & x_{r_i}\end{bmatrix}\right\}\substack{c_i}
   =
   \left.\begin{bmatrix}y_1 & \dots & y_{r_i}\end{bmatrix}\right\}\substack{k_i}.
   \end{align*}

   By splitting $x$ into $r_i$ blocks, the multiplication $(I_{r_i}\otimes A)x$ corresponds to reducing each block of $x$ by identical JL matrices.
   Repeating this process for a logarithmic number of steps, we get the complete dimensionality reduction.

   To make sure the matrix sizes match up, we have
   \begin{align*}
      d = r_{1} c_{1} ,\quad
      r_{1} k_{1} = r_{2} c_{2} ,\quad
      r_{2} k_{2} = r_{3} c_{3} ,\quad
                  \dots ,\quad
      r_{\ell-1} k_{\ell-1} = r_{\ell} c_{\ell}, \quad
      r_\ell k_\ell = k
      .
   \end{align*}
   We will define $k=\lceil C\eps^{-2}\log1/\delta\rceil$, $k_{i<\ell}=i k$, $k_\ell=k$,
   $c_1=k\zeta$,
   $c_{i>1} = \zeta k_{i-1}$,
   $r_i = \zeta^{\ell-i}$ and
   $\ell=\frac{\log(d/k)}{\log\zeta}$ such that $c_1r_1 = k\zeta^\ell = d$.
   The constant $C$ depends on the constant of the JL lemma we use for the individual $A_i$, but in general $10 C_0^2$ will suffice, where $C_0$ is the constant of Lemma \ref{lem:jl-product}.

   Recall the assumption that rectangular multiplication takes time $T(a,b,c)$, and hence
   the $i$th step thus takes time $T(k_i, c_i, r_i)$.
   Adding it all up we get
   \begin{align*}
      \sum_{i=1}^\ell T(k_i, c_i, r_i)
      &= T(k, \zeta k (\ell-1), 1) + \sum_{i=1}^{\ell-1} T(ik, \zeta k\max(1,i-1), \zeta^{\ell-i})
   \end{align*}
which is then upper bounded by  $\sum_{i=1}^{\ell} T(ik, \zeta ik, \zeta^{\ell-i})$. The claim follows.
\end{proof}

By the above theorem and by using the  matrix multiplication algorithm of Theorem \ref{thm:mmult}, we get the following theorem.

\begin{theorem}\label{crl:jl}
  For any $d, \eps, \delta>0$, there exists a $(\eps,\delta)$-JL matrix $M\in\R^{k \times d}$ such that the product $Mx$ can be computed in time
  $
      \BO{(dk + k^2 \sqrt{m} \log^3\tfrac dk)\, \tau m^{-3/2}},
  $
    on the $(m,\tau)$-TCU model, assuming $k\geq \sqrt m$.
\end{theorem}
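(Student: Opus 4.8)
The plan is to instantiate the abstract running-time bound of Lemma~\ref{jlconstr} with the concrete TCU matrix-multiplication cost of Theorem~\ref{thm:mmult}. First I would fix $\zeta$ to be a constant (say $\zeta=2$), so that $\ell=\log_\zeta(d/k)=\BO{\log\tfrac dk}$, and Lemma~\ref{jlconstr} gives a total cost of $\sum_{i=1}^{\ell} T(ik,\,\zeta ik,\,\zeta^{\ell-i})$, where $T(a,b,c)$ denotes the cost of multiplying an $a\times b$ matrix by a $b\times c$ matrix.

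The next step is to substitute $T$. Theorem~\ref{thm:mmult} gives $T(a,b,c)=\BO{abc\,m^{-3/2}\tau}$ only when all three dimensions are at least $\sqrt m$. Here the first two dimensions are $ik$ and $\zeta ik$, both at least $k\geq\sqrt m$ by hypothesis, so they are always admissible. The third dimension $\zeta^{\ell-i}$, however, shrinks to $1$ as $i\to\ell$, so for large $i$ it drops below $\sqrt m$. The standard way to handle a short dimension on a TCU is to pad it up to $\sqrt m$, since a single $\sqrt m\times\sqrt m$ tile is the smallest unit the accelerator processes; this yields $T(a,b,c)=\BO{ab\max(c,\sqrt m)\,m^{-3/2}\tau}$ whenever $a,b\geq\sqrt m$. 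I would apply this padded bound to every term.

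I would then split the sum according to whether $\zeta^{\ell-i}\geq\sqrt m$. For the ``tall'' terms the cost is $\BO{\zeta k^2 m^{-3/2}\tau\sum_i i^2\zeta^{\ell-i}}$; using $\zeta^{\ell-i}=(d/k)\zeta^{-i}$ and the fact that $\sum_{i\ge1} i^2\zeta^{-i}$ converges to a constant for fixed $\zeta>1$, this contributes $\BO{dk\,m^{-3/2}\tau}$, the first term of the claim. For the ``short'' terms the padded third dimension is $\sqrt m$, so the cost is $\BO{\zeta k^2\sqrt m\,m^{-3/2}\tau\sum_i i^2}$; bounding the remaining sum crudely by $\sum_{i=1}^{\ell} i^2=\BO{\ell^3}=\BO{\log^3\tfrac dk}$ gives the second term $\BO{k^2\sqrt m\log^3\tfrac dk\,m^{-3/2}\tau}$. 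Adding the two contributions yields the stated bound $\BO{(dk+k^2\sqrt m\log^3\tfrac dk)\,\tau m^{-3/2}}$.

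The one place requiring care rather than routine algebra is the treatment of the short third dimension. It is precisely the padding of $\zeta^{\ell-i}$ up to $\sqrt m$ that both introduces the extra $\sqrt m$ factor and forces the crude $\ell^3$ bound (in place of a convergent geometric sum) in the second term; securing the geometric decay that saves the first term while accepting the polylogarithmic loss on the second is the crux of matching the claimed bound. I would also verify explicitly that the $\max(c,\sqrt m)$ padding is legitimate in the $(m,\tau)$-TCU model and that $ik,\zeta ik\ge\sqrt m$ for all $i\in\{1,\dots,\ell\}$, which is exactly where the hypothesis $k\ge\sqrt m$ enters.
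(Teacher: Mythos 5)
Your proposal is correct and follows essentially the same route as the paper: substitute the TCU multiplication cost into the bound of Lemma~\ref{jlconstr}, pad the shrinking third dimension up to $\sqrt m$ (the paper writes this as the factor $\max(\zeta^{\ell-i},\sqrt m)$), and then bound the resulting sum by a convergent geometric series for the $dk$ term plus a crude $\ell^3$ bound for the $k^2\sqrt m\log^3\tfrac dk$ term. Your case split on $\zeta^{\ell-i}\gtrless\sqrt m$ is just a rephrasing of the paper's step of bounding the maximum by the sum, so the two arguments coincide.
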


\begin{proof}
   By theorem \ref{thm:mmult} it is possible to multiply to matrices $p\times r$ and $r \times q$ in time $O(pqr \tau / m^{3/2})$ if $p,q,r\geq \sqrt{m}$.
   By assumption $k \ge \sqrt{m}$, so we get that the upper bound in Lemma \ref{jlconstr} becomes
   \begin{align*}
      \sum_{i=1}^{\ell} T(ik, \zeta ik, \zeta^{\ell-i})
         &\le \tau m^{-3/2} k^2 \zeta \sum_{i=1}^{\ell} i^2 \max(\zeta^{\ell-i},\sqrt{m})
         .
   \end{align*}
   We bound the maximum by the sum:
   \begin{align*}
         \sum_{i=1}^{\ell} i^2 \max(\zeta^{\ell-i},\sqrt{m})
         &\le
         \sum_{i=1}^{\ell} i^2 (\zeta^{\ell-i} + \sqrt{m})
         &\le
         \zeta^\ell \sum_{i=1}^{\infty} i^2 \zeta^{-i}
         +\sqrt{m} \sum_{i=1}^{\ell} \ell^2
         &=
         \zeta^\ell \frac{\zeta(\zeta+1)}{(\zeta-1)^3}
         +\sqrt{m} \ell^3
   \end{align*}
   A sharper analysis can change the last term to $\sqrt{m}\log^3 m$, but that doesn't dominate unless $m\ge d/k$, so it is ultimately fruitless.
   Putting it all together we get
   \begin{align*}
      \tau m^{-3/2} k^2 \zeta (
         \zeta^\ell\, \tfrac{\zeta(\zeta+1)}{(\zeta-1)^3}
         +\sqrt{m}\, \ell^3)
      = O(
      \tau m^{-3/2} k^2 (
      \tfrac{d}{k}
      +\sqrt{m} \log^3 \tfrac dk)
         )
   \end{align*}
   where we took $\zeta=2$ and recalled $\zeta^\ell = d/k$ by definition of $\ell$.
   That is what we wanted to prove.
\end{proof}

In particular, for $\tau=O(m)$ it takes time $O(dk/\sqrt{m} + k^2\log^3\frac dk)$.
If $\sqrt m > k$ we can ``pad'' the construction by increasing $k$ to $\sqrt{m}$ and simply throw away the unneeded rows.
The running time is then $O(d + k^2\log^3\frac dk)$.

We observe that if $\tau = O(m)$, and $d$ dominates $k^2$, then we get time $O(dk/\sqrt{m}))$, which improves a factor $\sqrt{m}$ over a standard application of the standard JL transform in the case of dense vectors, and for $m\approx k$ this even improves upon the so-called ``Fast JL transform''~\cite{ailon2006approximate}.

Finally, we note the following extra properties of the construction:
\begin{enumerate}
   \item In the case of sparse vectors, where many blocks of $x$ are empty, we can skip them in the computation.
   \item The computation can be easily parallelized, with different blocks of $x$ being reduced on different machines.
   Our construction also implies a $O(dk/\sqrt{m})$ upper bound in the external memory model.
   \item Our construction improves upon the standard matrix-vector multiplication for JL, even in the standard RAM model, by using the Coppersmith-Winograd method for fast matrix multiplication.
      In particular we can do JL in time $dk^\eps+k^{2+\eps}$ if matrix multiplication takes time $n^{2+\eps}$.
   \item The construction works with any distribution of matrices that have the Strong JL Moment Property. This means we can use random $\pm1$ matrices or even $\eps$-Sparse JL matrices.
\end{enumerate}

\section{Similarity Join}\label{sec:sj}
In this section we will study the similarity join problem, which is defined as follows:
given two sets $P$ and $Q$ of $n$ points each in $\mathbb{R}^d$ and a distance measure $D:\mathbb{R}^d \rightarrow R^+_0$, compute the set $P \bowtie_r Q = \{(x,y): x\in P, y\in Q, D(x,y)\leq r\}$. 

We consider distance measures that can be computed with an inner product on a suitable transformation of the two points (e.g, kernel function): specifically, for each pair $x,y\in \mathbb{R}^d$, there exist two functions $f,g:\mathbb{R}^d \rightarrow \mathbb{R}^{d'}$ such that $D(x,y) = f(x)\cdot g(y)$, where $\cdot$ denotes the inner product.
We refer to a distance function that satisfies this property as \emph{ip-distance}.
For the sake of simplicity, we assume $d'=\BT{d}$.
Notable examples of ip-distances are Hamming,  squared $L_2$ distance, and cosine similarity.
For Hamming, we have $f(x)=(x_0, 1-x_0, x_1, 1-x_1,\ldots, x_{d-1}, 1-x_{d-1})$ and $g(x)=(1-y_0, y_0, 1 -y_1, y_1 \ldots, 1-y_{d-1},y_{d-1})$.
For the squared $L_2$ distance, we have 
$f(x)=(x^2_0, 1, -2x_0, x^2_1, 1, -2x_1\ldots, x^2_{d-1}, 1, -2x_{d-1})$  and $
g(x)=(1, y^2_0, y_0, 1, y^2_1, y_1 \ldots, 1, -y_{d-1}^2, y_2)$.
Finally, for cosine similarity, we have $f(x)=g(x)=x/||x||_2$.

The simplest way to exploit TCUs is a brute force approach, where all pair distances are computed. As ip-distance computations can be translated into inner products, we can reduce the similarity join problem to a simple matrix multiplication between two $n\times d'$ matrices $F_P$ and $G_Q$: $F_P$ and $G_Q$ are the matrices representing, respectively, the sets $\{f(p), \forall p\in P\}$ and $\{g(q), \forall q\in Q\}$.
By exploiting TCUs, we can compute $P\cdot Q^T$ in time $\BO{dn^2 m^{-3/2} \tau}$.

We now show a more efficient approach that uses LSH for reducing the number of candidate pairs for which we have to compute the distance. 
The proposed algorithm finds all $P \bowtie_r Q$ pairs in expectation, but it can be easily modified to return all near pairs with high probability by running $\BO{\log n}$ instances of the algorithm and merging the results.

The standard LSH approach for similarity join (see e.g. \cite{Gionis_VLDB99,PPSS17}) requires to partition the points in $P\cup Q$ into buckets using an $(r, cr, p_1, p_2)$-sensitive monotone LSH. Then, a brute force algorithm is used for searching similar pairs  within each bucket. The procedure is then repeated $L$ times with independent LSHs to guarantee that all near pairs are found.
The LSH is usually set so that $p_2=1/n$, which implies that each point collides  once (in expectation) with a point at distance larger than $cr$ (i.e., a far point), while $L$ is set to $\BTO{p_1^{-1}}=	\BTO{p_2^{-\rho}} = \BTO{n^\rho}$ to guarantee that each near pair is found once (in expectation).

As for similarity join in   the external memory model \cite{PPSS17}, 
we can improve the performance in the TCU model by increasing the value of $p_2$ (i.e.,  by allowing for more collisions between far points), which implies that the number $L$ of repetitions decreases since $L=p_1^{-1} = \BTO{p_2^{-\rho}}$.
 We observe that a TCU unit can multiply two matrices of size $\sqrt{m'}\times \sqrt{m'}$ in a TCU$(m,\tau)$  in $\tau$ time for each $m'\leq m$, and we exploit this fact by increasing the number of collisions with far points.
We  set $p_2=m^{3/2}/(\tau n)$:  each point collides in expectation with at most $m^{3/2}/\tau$ far points, but the overhead due to the respective inner products do not dominate the running time.

As an LSH is usually given as a black box $\mathcal{H'}$ with fixed probability values $p_1'$ and $p'_2$, we can  get the desired probability  $p_2=m^{3/2}/(\tau n)$ by concatenating $k= \log_{p'_2} p_2$ hash functions. However, if $k$ is not an integer, the rounding gives $L= \BO{n^\rho p_1^{-1}}$.
A more efficient approach has been recently proposed in \cite{A20} that uses $L_{high}$ hash tables by concatenating $\lceil k\rceil$ LSHs $\mathcal{H'}$, and $L_{low}$ hash tables by concatenating $\lfloor k\rfloor$ LSHs $\mathcal{H'}$, and where  $L=L_{low}+L_{high}= \BO{n^\rho p_1^{-(1-\rho)}}$. 
The right values of $L_{low}$ and $L_{high}$ depend on the decimal part of $k$.

The above algorithm gives the following result (proof in appendix). 

\begin{theorem}\label{thm:sh}
Given two sets $P,Q\subset R^{d}$ of $n$ points, with $n,d\geq \sqrt{m}$, 
a threshold value $r>0$, and a $(r,c,p_1,p_2)$-sensitive LSH, then the set $P \bowtie_r Q$  for an ip-distance can be computed  on a TCU$(m,\tau)$ in expected time:
$$
\BO{  p_1^{\rho-1}  (n \tau m^{-3/2})^\rho \left(\frac{|P \bowtie_r Q|\tau}{ m^{3/2}}  + n\right)+  \tau m^{-3/2} |P \bowtie_{cr} Q| }.$$
\end{theorem}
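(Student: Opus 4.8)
The plan is to instantiate the LSH join sketched above, to charge the running time to three disjoint classes of colliding pairs (far, near, and ``annulus'' pairs), and to use Theorem~\ref{thm:mmult} so that the all-pairs distance computation inside each bucket is performed by the TCU at amortized cost $\BO{d\tau m^{-3/2}}$ per pair rather than $\BO{d}$. Throughout I would write $N=\abs{P\bowtie_r Q}$, $N_c=\abs{P\bowtie_{cr}Q}$, and let $\hat p_2 = m^{3/2}/(\tau n)$ be the target amplified collision probability of a far pair.

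First I would fix the hashing scheme. Concatenating $k=\log_{p_2}\hat p_2$ copies of the given LSH yields a family whose far-pair ($D\ge cr$) collision probability is at most $\hat p_2$ and whose near-pair ($D\le r$) collision probability is at least $p_1^{k}=\hat p_2^{\rho}$ (using $p_1=p_2^{\rho}$). Since $k$ is generally non-integral, I would invoke the mixed $\lfloor k\rfloor/\lceil k\rceil$ construction of~\cite{A20}, which with $L=\BO{\hat p_2^{-\rho}\,p_1^{-(1-\rho)}}$ tables guarantees that every near pair collides in at least one table in expectation, so all of $P\bowtie_r Q$ is reported in expectation. Substituting $\hat p_2=m^{3/2}/(\tau n)$ gives $L=\BO{p_1^{\rho-1}(n\tau m^{-3/2})^{\rho}}$, exactly the leading factor of the claim; hashing and scanning the $2n$ points over all tables costs $\BO{Ln}$, which is the additive $n$ inside the first parenthesis.

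Next I would handle the per-bucket work. Via the ip-distance maps $f,g$, the cross distances between the $P$- and $Q$-points of one bucket form a product of an $a\times d'$ matrix by a $d'\times b$ matrix with $d'=\BT d$, which by Theorem~\ref{thm:mmult} costs $\BO{abd\,\tau m^{-3/2}}$ whenever $a,b\ge\sqrt m$, i.e. $\BO{d\tau m^{-3/2}}$ per colliding pair (the $d$ entering a block exactly as in the brute-force bound $\BO{dn^2 m^{-3/2}\tau}$). The one genuine obstacle is that many buckets have fewer than $\sqrt m$ points, where the theorem does not apply and naive padding to $\sqrt m$ would inflate the cost. I would resolve this by sorting a table's points by bucket identifier and grouping them into super-blocks of $\BT{\sqrt m}$ points, running a single padded $\sqrt m\times d'$ by $d'\times\sqrt m$ multiplication per super-block and discarding cross-bucket entries. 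Each super-block costs $\BO{d\tau m^{-1/2}}$ and there are $\BO{n/\sqrt m}$ per table, so the padding overhead is $\BO{nd\tau m^{-1}}$ per table and folds into the first term, while large buckets retain the clean $\BO{d\tau m^{-3/2}}$ per-pair rate.

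Finally I would bound, by linearity of expectation over the $L$ tables, the expected number of within-bucket pairs whose distance is actually evaluated, split by distance. A far pair collides with probability at most $\hat p_2$ per table, so each point meets at most $\hat p_2 n=m^{3/2}/\tau$ far points per table and the far contribution is $\BO{Ln\cdot (m^{3/2}/\tau)\cdot d\tau m^{-3/2}}=\BO{Lnd}$, the first term. A near pair may collide in up to $L$ tables, and there are $N$ of them, giving $\BO{NL\cdot d\tau m^{-3/2}}=\BO{L\,Nd\tau m^{-3/2}}$, the second term. An annulus pair ($r<D\le cr$) collides with probability at most $p_1^{k}=\hat p_2^{\rho}$ per table by monotonicity of the LSH, so its total contribution is $\BO{L\hat p_2^{\rho}\,N_c\cdot d\tau m^{-3/2}}=\BO{p_1^{\rho-1}N_c\,d\tau m^{-3/2}}$, matching the final $\tau m^{-3/2}N_c$ term (the factor $p_1^{\rho-1}$ sitting in the leading constant). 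Summing the three classes with the hashing and padding overhead yields the stated bound. I expect the two delicate points to be the super-block packing of sub-$\sqrt m$ buckets, which is what preserves the $\sqrt m$ TCU speedup in the checking phase, and the monotone-LSH collision estimate that prevents the annulus pairs from carrying the full factor $L$.
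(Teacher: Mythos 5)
Your proposal follows essentially the same route as the paper's proof: amplify the LSH to a target far-pair collision probability of $m^{3/2}/(\tau n)$ using the mixed $\lfloor k\rfloor/\lceil k\rceil$ construction of~\cite{A20}, run a TCU brute-force join per bucket via Theorem~\ref{thm:mmult} at amortized cost $\BO{d\tau m^{-3/2}}$ per colliding pair, and bound the expected work by splitting colliding pairs into near, annulus, and far classes exactly as in Equation~\eqref{eq:1}. The one place you go beyond the paper is the explicit super-block packing of buckets with fewer than $\sqrt{m}$ points, which is a more careful treatment than the paper's remark that each nonempty bucket can be assumed to hold $\BOM{\sqrt{m}}$ points in expectation, and it correctly preserves the claimed bound.
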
 
\begin{proof}
We initially prove the theorem by assuming that our LSH matches $p_2 =  m^{3/2} /(n \tau)$ and $p_1=p_2^\rho$.
Let  $P_{i,j}$ be the set of points in $P$ with hash value $j$ under the $i$-th LSH, for each $1\leq i \leq L$ and let $p_{i,j}=|P_{i,j}|$
(similarly for $Q_{i,j}$ and $q_{i,j}$ ).
For each $p_{i,j}>0$, we can assume $q_{i,j}\geq \sqrt{m}$ since there are in expectation $np_2=m^{3/2}/\tau = \BOM{\sqrt{m}}$ far points per each point in $P_{i,j}$; equivalently, for  each $q_{i,j}>0$
we can assume $p_{i,j}\geq \sqrt{m}$.
If the brute force join within a bucket is carried out with the  matrix multiplication in Theorem \ref{thm:mmult}, we have that the cost of the algorithm is
$
T_{\textnormal{sim join}} = \sum_{i=1}^{L}\sum_{j:p_{i,j}>0}  d p_{i,j} q_{i,j} m^{-3/2} \tau.
$
For given values of $i$ and $j$, the $p_{i,j}q_{i,j}$ pairs can be split into three categories: the $R_{i,j}$ pairs with distance in $[0,r]$, the $C_{i,j}$ pairs with distance in $(r, cr]$, and the $F_{i,j}$ pairs  with distance $>cr$ (far pairs). We have  $p_{i,j}q_{i,j} = R_{i,j}+C_{i,j}+F_{i,j}$.
Since  pairs with distance in $[0,r]$ collide with probability at most 1, pairs with distance in $[r, cr]$ collide with probability at most $p_1$ (for the monotonicity of LSH), and pairs with distance $>cr$ collide with probability at most $p_2$, we have 
$E[R_{i,j}+C_{i,j}+F_{i,j}]\leq |P \bowtie_r Q| + |P \bowtie_{cr} Q|p_1 + n p_2$.
Therefore, the expected value of the running time is:
\begin{align}
E[T_{\textnormal{sim join}}] &= E\left[\sum_{i=1}^{L}\sum_{j:p_{i,j}>0}  d (R_{i,j}+C_{i,j}+F_{i,j})]m^{-3/2}\tau \right] \nonumber\\
&\leq  d L m^{-3/2}\tau  E\left[\sum_{j:p_{i,j}>0}  \left( R_{i,j} + C_{i,j} + F_{i,j}\right)\right]\nonumber\\
&\leq L  m^{-3/2}\tau  \left( |P \bowtie_r Q| + |P \bowtie_{cr} Q |p_1 +  p_2 n\right)\label{eq:1}
\end{align}
from which we get the theorem since  $L=p_1^{-1}=p_2^{-\rho}$ and $p_2=m^{3/2}/(\tau n)$.

Finally, we consider we are given an LSH $\mathcal H$  where 
$p_2 >  m^{3/2} /(n \tau)$. If $k=\log_{p_2} (m^{3/2} /(n \tau))$ is an integer, then it suffices to construct another LSH by concatenating $k$ copies with collision probability $p_2$ (i.e., $\mathcal H^k$).
If $k$ is not an integer, we use the approach in \cite{A20}: we construct $L_{high}$ hash tables by concatenating $\lceil k\rceil$ LSHs $\mathcal{H'}$ and $L_{low}$ hash tables by concatenating $\lfloor k\rfloor$ LSHs $\mathcal{H'}$.
We have that $L=L_{low}+L_{high}= \BO{n^\rho p_1^{-(1-\rho)}}$, which implies a  multiplicative factor $\BO{p_1^{\rho-1}}$ in the previous upper bound in Equation \ref{eq:1}.
The right values of $L_{low}$ and $L_{high}$ depend on the decimal part of $k$ and we refer to \cite{A20} for the exact values.
\end{proof}

When $\tau=\BO{m}$, there are at least $n\sqrt{m}$ near pairs, and the number of pairs with distance in $[r,cr]$ is at most linear with the number of near pairs (which happens in several datasets \cite{PPSS17}), the cost is $\BO{  p_1^{\rho-1} (n /\sqrt{m})^\rho {|P \bowtie_r Q|}/{ \sqrt{m}}  }$, a factor at least $\sqrt{m}$ faster than an LSH solution without TCU (e.g., $\BO{  p_1^{\rho-1} n^\rho {|P \bowtie_r Q|} }$).

\section{Conclusion}\label{concl}
In this paper, we have investigated from a theoretical point of view how to exploit TCU accelerators for similarity search problems, showing a $\sqrt{m}$ improvement over algorithms for traditional architectures.
As future work, we plan to experimentally evaluate our algorithms on common TCU accelerators, such as the GPU Nvidia Tesla. 
 
\bibliographystyle{plain}
\bibliography{biblio}

\end{document}